\documentclass[10pt,twocolumn]{article}
\usepackage[letterpaper,margin=0.75in,columnsep=0.3in]{geometry}
\usepackage[T1]{fontenc}
\usepackage{xspace}
\usepackage{microtype}
\usepackage{amsmath,amsthm}
\usepackage{newtxtext,newtxmath}
\usepackage{graphicx}
\usepackage{booktabs}
\usepackage{multirow}
\usepackage{xcolor}
\usepackage{tikz}
\usetikzlibrary{arrows.meta,positioning,fit,calc,backgrounds,shapes.geometric}
\usepackage[numbers,sort&compress]{natbib}
\usepackage[hidelinks]{hyperref}
\usepackage[capitalise,nameinlink]{cleveref}
\usepackage{algorithm}
\usepackage[noend]{algpseudocode}
\usepackage{enumitem}
\setlist{nosep,leftmargin=*}

\newtheorem{proposition}{Proposition}

\newcommand{\ms}{\,ms\xspace}
\newcommand{\s}{\,s\xspace}
\newcommand{\GBs}{\,GB/s\xspace}

\newcommand{\NAOnehitFiveZero}{0.89}

\newcommand{\NAOnenhit}{57}
\newcommand{\NAOnepFiveZero}{0.38}

\newcommand{\NAOnepNineNine}{1.34}

\newcommand{\NBOnehitFiveZero}{0.74}

\newcommand{\NBOnenhit}{4}
\newcommand{\NBOnepFiveZero}{0.31}

\newcommand{\NBOnepNineNine}{1.16}

\newcommand{\NBTwohitFiveZero}{0.29}

\newcommand{\NBTwonhit}{60}
\newcommand{\NBTwopFiveZero}{0.32}

\newcommand{\NBTwopNineNine}{0.36}

\newcommand{\NBThreehitFiveZero}{0.88}

\newcommand{\NBThreenhit}{62}
\newcommand{\NBThreepFiveZero}{0.32}

\newcommand{\NBThreepNineNine}{1.11}

\newcommand{\NBFivehitFiveZero}{0.83}

\newcommand{\NBFivenhit}{62}
\newcommand{\NBFivepFiveZero}{0.32}

\newcommand{\NBFivepNineNine}{1.00}

\newcommand{\NBSixhitFiveZero}{0.83}

\newcommand{\NBSixnhit}{60}
\newcommand{\NBSixpFiveZero}{0.32}

\newcommand{\NBSixpNineNine}{0.95}
\newcommand{\NBstar}{2.53}

\newcommand{\NHOnebcached}{98.5}

\newcommand{\NHOnebmax}{86.72}
\newcommand{\NHOnebmaxrun}{59}

\newcommand{\NHOnebpFiveZero}{9.70}
\newcommand{\NHOnebpNineZero}{61.17}
\newcommand{\NHOnebpNineNine}{83.32}
\newcommand{\NHOnebthirdOne}{3.52}
\newcommand{\NHOnebthirdTwo}{9.02}
\newcommand{\NHOnebthirdThree}{52.30}
\newcommand{\NHTwobcached}{50.6}

\newcommand{\NHTwobmax}{4.04}

\newcommand{\NHTwobpFiveZero}{3.20}
\newcommand{\NHTwobpNineZero}{3.70}
\newcommand{\NHTwobpNineNine}{3.82}
\newcommand{\NHTwobthirdOne}{1.20}
\newcommand{\NHTwobthirdTwo}{3.24}
\newcommand{\NHTwobthirdThree}{3.56}
\newcommand{\NHThreecached}{98.5}

\newcommand{\NHThreemax}{1.25}

\newcommand{\NHThreepFiveZero}{0.41}
\newcommand{\NHThreepNineZero}{0.50}
\newcommand{\NHThreepNineNine}{0.80}
\newcommand{\NHThreethirdOne}{0.41}
\newcommand{\NHThreethirdTwo}{0.41}
\newcommand{\NHThreethirdThree}{0.41}
\newcommand{\NHFourcached}{98.5}

\newcommand{\NHFourmax}{1.07}

\newcommand{\NHFourpFiveZero}{0.41}
\newcommand{\NHFourpNineZero}{0.48}
\newcommand{\NHFourpNineNine}{0.73}
\newcommand{\NHFourthirdOne}{0.42}
\newcommand{\NHFourthirdTwo}{0.41}
\newcommand{\NHFourthirdThree}{0.41}
\newcommand{\NHFivebcached}{98.5}

\newcommand{\NHFivebmax}{0.86}

\newcommand{\NHFivebpFiveZero}{0.40}
\newcommand{\NHFivebpNineZero}{0.46}
\newcommand{\NHFivebpNineNine}{0.67}
\newcommand{\NHFivebthirdOne}{0.41}
\newcommand{\NHFivebthirdTwo}{0.40}
\newcommand{\NHFivebthirdThree}{0.40}
\newcommand{\NNbreak}{17}
\newcommand{\NPRODsubmax}{4.86}
\newcommand{\NPRODsubpFiveZero}{0.35}
\newcommand{\NPRODsubpNineZero}{0.41}
\newcommand{\NPRODsubpNineNine}{1.14}
\newcommand{\NRECsubmax}{2.76}
\newcommand{\NRECsubpFiveZero}{0.31}
\newcommand{\NRECsubpNineZero}{0.32}
\newcommand{\NRECsubpNineNine}{1.28}
\newcommand{\NRrate}{52.2}
\newcommand{\NSUBtokmax}{17,889}
\newcommand{\NSUBtokmin}{14,715}
\newcommand{\NSUBtokpFiveZero}{16,390}
\newcommand{\NSUBtoktotal}{7.35}
\newcommand{\NTloadeight}{0.074}

\newcommand{\NTzero}{0.33}
\newcommand{\Nlam}{1.21}
\newcommand{\Npredone}{0.79}
\newcommand{\Nrhoeight}{0.09}

\newcommand{\Nrhosust}{1.13}
\newcommand{\NsKB}{48.4}

\title{\Large\bfseries When the KV Cache Is Slower Than the GPU:\\
Scaling Remote KV Tiers for Context-Parallel Prefill}
\author{Muhammad Ibrahim\\
Multiverse Computing\\
\texttt{muhammad.ibrahim@multiversecomputing.com}}
\date{}

\begin{document}
\maketitle

\begin{abstract}
Reusing the attention KV cache across requests is a standard way to cut prefill
cost in long-context LLM serving, and production systems increasingly back the
per-engine cache with a remote tier shared by every engine. We study such a tier
in a disaggregated deployment of GLM-5.3, an FP8 mixture-of-experts model with
multi-head latent attention (MLA), on 8$\times$B300 nodes running prefill context
parallelism (PCP8). Enabling the remote tier (LMCache with a Valkey store) left
median time-to-first-token unchanged but raised the 16k-token p99 from 3.96\s to
17.3\s. Replaying a production request stream on an isolated engine, we find that
a remote hit took \NBThreehitFiveZero\s against \NBOnepFiveZero\s to recompute the
prompt, and that synchronous loading turned a hit-heavy stream into an unstable
queue: p99 \NHOnebpNineNine\s with the GPUs 7\% busy. A two-term cost model
explains both: a hit pays off only if the tier delivers more than $B^*=sR$
(\NBstar\GBs for this model and GPU), and a fixed per-hit engine cost
($T_0\approx\NTzero$\s) means that hits on prompts shorter than about
\NNbreak k tokens cannot beat recompute at any bandwidth. Guided by the model,
we (i) shard the MLA KV store across PCP ranks, with a proof that the existing
minimum-over-ranks lookup returns the longest cached prefix and a load protocol
that can never desynchronize collectives; (ii) partition the remote tier by
rank, one server instance per rank with no coordination, which raises GET
bandwidth $6.3\times$ to 10.76\GBs; (iii) add asynchronous loading to the
sharded store; and (iv) fix three client bugs that caused unbounded stalls. On
the hit-heavy stream, with 98.5\% of prompt tokens served from the remote tier,
p99 falls from \NHOnebpNineNine\s to \NHFivebpNineNine\s. The sharded store
raises prefill throughput by 14\%, and by 34\% together with an upstream kernel
fix. All changes are submitted upstream to LMCache.
\end{abstract}

\section{Introduction}
\label{sec:intro}

For long prompts, prefill dominates time-to-first-token (TTFT): reading a
200k-token prompt takes seconds even on the newest accelerators. Serving systems
therefore keep the attention key/value (KV) cache of earlier prompts and reuse it
when a new prompt shares a prefix~\cite{vllm,sglang,cachedattention,mooncake,lmcache}.
A common layout has two tiers: a fast tier in each engine's host memory, and a
larger remote tier that every engine can read and that survives restarts.
The remote tier is attractive because it decouples cache capacity from engine
count and lets one engine reuse what another computed.

The implicit assumption behind a remote tier is that \emph{loading} KV is cheaper
than \emph{recomputing} it. This paper reports a production setting in which the
assumption failed, explains why with a simple model, and shows how to restore it.

\paragraph{Setting.} We serve GLM-5.3, a large mixture-of-experts (MoE) model with
DeepSeek-style multi-head latent attention (MLA)~\cite{deepseekv2,deepseekv3},
in FP8 on nodes with eight NVIDIA B300 GPUs. Prefill and decode run on separate
nodes~\cite{distserve,splitwise}. Each prefill node runs one vLLM~\cite{vllm}
engine with \emph{prefill context parallelism} across its eight GPUs (PCP8): the
prompt is split along the sequence, the ranks exchange KV with collectives, and
all eight ranks advance in lockstep. LMCache~\cite{lmcache} stores KV in 256-token
chunks in host memory (L1) and in a Valkey~\cite{valkey} remote store (L2).

\paragraph{Problem.} Turning on L2 for an enterprise SLA workload of 9k--200k-token
prompts left median TTFT unchanged but inflated the tail: p99 rose from 3.96\s to
17.3\s at 16k tokens and from 6.06\s to 25.0\s at 32k. The deployment was rolled
back to L1 only.

\paragraph{Findings.} We rebuilt one production engine's request stream from logs,
replayed it against an isolated engine with the identical image and
configuration, and profiled it (\S\ref{sec:method}). Three effects explain the
regression:
\begin{enumerate}
\item \textbf{A hit was slower than a miss.} A 16k-token prompt took
  \NBOnepFiveZero\s to recompute but \NBThreehitFiveZero\s to load from L2
  (\S\ref{sec:eval-cost}). On a B300 node the GPU prefills about
  \NRrate k tokens/s; the L2 path delivered well under the bandwidth needed to
  keep up with that.
\item \textbf{Synchronous loading is unstable under hit-heavy load.} A load
  inside the engine step stalls all eight PCP ranks. When the arrival rate times
  the per-hit service time exceeds one, the engine becomes an overloaded queue:
  median latency grew from \NHOnebthirdOne\s to \NHOnebthirdThree\s within six
  minutes while the GPUs were 7\% busy (\S\ref{sec:eval-hit}).
\item \textbf{Rare client bugs caused unbounded stalls.} A lost-completion race
  and a missing-key read that never returns each froze an engine for 3--10\s per
  request, or permanently (\S\ref{sec:bugs}).
\end{enumerate}

\paragraph{Contributions.}
\begin{itemize}
\item A \textbf{cost model} for remote KV hits with two parameters,
  break-even bandwidth $B^*=sR$ and a fixed per-hit cost $T_0$, validated on
  configurations that were not used to fit it (\S\ref{sec:model}).
\item A \textbf{sharded MLA KV store for context parallelism} that spreads
  store, lookup and load across ranks, with a proof that the scheduler's
  existing minimum-over-ranks lookup returns the longest cached prefix and a
  load protocol in which every rank issues the same collectives
  (\S\ref{sec:shard}).
\item A \textbf{rank-partitioned remote tier}: because the sharded store
  partitions chunks statically by rank, rank $r$ of every engine can use its own
  server instance without a cluster protocol or client-side hashing, while
  cross-engine reuse still works (\S\ref{sec:perrank}).
\item \textbf{Asynchronous loading for the sharded store} and fixes for
  \textbf{three client bugs}, each with a regression test (\S\ref{sec:async},
  \S\ref{sec:bugs}).
\item An \textbf{evaluation} by trace replay of production traffic: p99 on a
  hit-heavy stream falls from \NHOnebpNineNine\s to \NHFivebpNineNine\s, a
  $124\times$ reduction, with 98.5\% of prompt tokens served from L2; the
  sharded store raises prefill throughput by 14--34\%; and a five-node
  disaggregated deployment using these changes passes the workload's SLA
  (\S\ref{sec:eval}).
\end{itemize}

The broader lesson is that on current GPUs recompute is a demanding baseline.
A cache tier has to be judged against the prefill rate of the hardware it serves,
not against the latency of the network.

\section{Background}
\label{sec:background}

\paragraph{MLA and its KV cache.} MLA~\cite{deepseekv2} projects each token's keys
and values into one low-rank latent vector shared by all attention heads. The
per-token KV state is therefore small and independent of the head count. For
GLM-5.3 in FP8, one 256-token LMCache chunk covering all layers is 12.4\,MB, so
$s\approx\NsKB$\,KB per token. Splitting attention heads across GPUs (tensor
parallelism) does not shrink this state, because every head needs the full latent.

\paragraph{Prefill context parallelism.} PCP~\cite{ringattention,ulysses,cpinference}
splits the prompt along the sequence, so each of the $W$ ranks processes about
$1/W$ of the tokens. Ranks exchange KV with an AllGather so that causal attention
sees every earlier token; after the exchange every rank holds the whole prompt's
latent KV. In our deployment the MoE layers additionally use expert parallelism
(EP8). Every forward step contains NCCL collectives, so \emph{the ranks run in
lockstep}: anything that delays one rank delays all of them.

\paragraph{Prefill/decode disaggregation.} Prefill is compute-bound and decode is
memory-bandwidth-bound, so separating them lets each use its own parallelism and
batch limits~\cite{distserve,splitwise}. A router sends each request to a prefill
engine, which computes (or loads) the prompt's KV and ships it to a decode engine
over NIXL~\cite{nixl}; the decode engine streams the answer.

\paragraph{LMCache.} LMCache~\cite{lmcache} splits KV into fixed-size chunks keyed
by a hash of the token prefix, so a chunk's key identifies everything before it.
A \emph{lookup} returns the longest prompt prefix whose chunks are all present; a
gap means everything after it is recomputed. For MLA, LMCache stores and loads
only on worker 0 (\texttt{save\_only\_first\_rank}), since every tensor-parallel
rank holds the same latent KV. The remote backend speaks the RESP protocol to a
Redis-compatible server, through a native client that keeps a pool of
connections per engine worker.

\paragraph{Valkey.} Valkey~\cite{valkey}, like Redis, executes every command on
a single main thread; I/O threads only parallelize socket reads and writes. A
single instance's command throughput is therefore bounded by one core, whatever
the client count.

\begin{figure*}[t]
\centering
\begin{tikzpicture}[
  font=\footnotesize,
  box/.style={draw, rounded corners=2pt, align=center, inner sep=4pt, fill=white},
  rank/.style={draw, minimum width=0.78cm, minimum height=0.55cm, inner sep=0pt, fill=blue!10, align=center, font=\scriptsize},
  l1/.style={draw, minimum width=0.78cm, minimum height=0.32cm, inner sep=0pt, fill=orange!15, font=\scriptsize},
  vk/.style={shape=cylinder, draw, shape border rotate=90, aspect=0.22, minimum width=0.78cm, minimum height=0.72cm, inner sep=0pt, fill=green!12, font=\scriptsize},
  arr/.style={-{Latex[length=1.8mm]}, thick},
  darr/.style={{Latex[length=1.4mm]}-{Latex[length=1.4mm]}, semithick}]
  \foreach \i in {0,...,7} {
    \node[rank] (r\i) at (\i*1.0, 0) {rank \i};
    \node[l1, below=0.08cm of r\i] (l\i) {L1};
    \node[vk] (v\i) at (\i*1.0, -2.1) {$v_{\i}$};
    \draw[darr] (l\i.south) -- (v\i.north);
  }
  \node[draw, rounded corners=3pt, dashed, inner sep=6pt, fit=(r0)(r7)(l0)(l7),
        label={[anchor=south]north:Prefill engine on one 8$\times$B300 node: PCP8 + EP8, ranks in lockstep}] (pf) {};
  \node[box, left=1.5cm of pf] (router) {Router\\[1pt]{\scriptsize cache- and load-aware}};
  \node[box, right=1.7cm of pf, minimum height=1.1cm] (dec) {Decode node\\[1pt]{\scriptsize DP8 + EP8}};
  \draw[arr] (router) -- node[above]{\scriptsize prompt} (pf);
  \draw[arr] (pf) -- node[above]{\scriptsize KV} node[below]{\scriptsize NIXL} (dec);
  \node[draw, rounded corners=3pt, dashed, inner sep=5pt, fit=(v0)(v7),
        label={[anchor=north]south:L2: one Valkey instance per rank; $v_r$ is shared by rank $r$ of every prefill engine}] (l2) {};
  \node[anchor=north west, align=left, font=\scriptsize, text width=2.9cm] at ($(l2.north east)+(0.35,0.05)$)
    {Chunk $c$ is owned by rank $c \bmod 8$: only that rank stores, looks up and loads it.};
\end{tikzpicture}
\caption{System overview, final design. The original design stored every chunk
from rank 0 into a single Valkey instance.}
\label{fig:arch}
\end{figure*}

\section{Methodology}
\label{sec:method}

We wanted measurements that match production without touching it.

\paragraph{Trace replay.} From router logs and the benchmark client's database we
rebuilt the exact request stream that one production prefill engine received in
the 16k-token cell of the SLA workload: 683 requests with their original prompts
and send times. A replayer sends each request at its original offset and records
completion time, prompt tokens and cached tokens. Unless stated otherwise,
\emph{latency} is the prefill leg's completion time, which the router logs as
``prefill completed'' in production.

\paragraph{Isolated engine.} Replays ran on a free 8$\times$B300 node with the
production container image, command line and environment (vLLM~0.31.0,
LMCache~0.5.5 with the sharded-store patch, FlashInfer~0.7.0, NCCL~2.30, CUDA~13.0).
The differences from production were: prefill only (no decode handoff), the
LMCache controller off (so no production router could select the engine), and a
private Valkey~8.1 (400\,GB, LRU, 16 I/O threads) instead of the shared 1.28\,TB
instance. A sampler recorded engine metrics every 5\s and Valkey
\texttt{INFO commandstats} for server-side GET counts and time.

\paragraph{Hit-heavy subset.} The full stream's chunks ($\approx$495\,GB) exceed
the private Valkey, and replaying it in order under LRU evicts each prompt just
before it is needed again, so hit rates stay at 10--20\%. To measure the hit path
we use the first 450 requests (\NSUBtoktotal M prompt tokens, median
\NSUBtokpFiveZero\ tokens, range \NSUBtokmin--\NSUBtokmax), which fit. Before each
arm, Valkey is flushed, primed by one replay, and the engine restarted so that L1
is empty; 98.5\% of prompt tokens are then served from L2. The same 450 requests
in the cold run give the recompute baseline.

\paragraph{Profiling.} Nsight Systems captured all eight GPU workers with CUDA and
NVTX tracing, started and stopped through vLLM's profiler hook; py-spy took stack
dumps of engine processes during stalls with no GPU activity. Profiled runs are
used only to attribute time, never as latency measurements.

\paragraph{Reporting.} Percentiles use linear interpolation over all requests of
a run. Every number in \S\ref{sec:eval} is computed by one script from the raw
per-request logs. Each condition was run once; \S\ref{sec:limits} discusses what
this does and does not support.

\section{A Cost Model for Remote KV Hits}
\label{sec:model}

Let $n$ be the number of prompt tokens served from the cache, $s$ the KV bytes per
token, $R$ the engine's prefill rate in tokens per second, and $B$ the bandwidth
the cache tier delivers to one prompt. To first order,
\begin{equation}
T_{\text{recompute}} = \frac{n}{R}, \qquad
T_{\text{load}} = T_0 + \frac{n s}{B},
\label{eq:model}
\end{equation}
where $T_0$ is everything a hit still pays that does not scale with
bandwidth: scheduling, lookup round-trips, the forward step over the few
uncached tail tokens, and the per-chunk metadata exchange and broadcast.

\paragraph{Break-even bandwidth.} Ignoring $T_0$, a hit pays off iff
$B > B^* = sR$. The prompt length cancels: the threshold depends only on the
model's KV footprint and the hardware's prefill speed. For GLM-5.3 FP8 on one
B300 node, $s=\NsKB$\,KB/token and $R=\NRrate$k tokens/s (the median rate of cold
misses in our replay), so
\[
B^* = sR \approx \NBstar~\text{GB/s}.
\]
A single Valkey instance serving eight PCP ranks delivered 1.71\GBs at best
(\S\ref{sec:eval-bw}), below $B^*$: every L2 hit was slower than recomputing.

\paragraph{Fixed cost.} With $T_0>0$, a hit pays off only if
$n/R > T_0 + ns/B$, which requires at least $n > R\,T_0$. Below that prompt length,
recomputing is faster at \emph{any} bandwidth. We fit $T_0$ on a single point, the
median of the fastest configuration (8 instances, $B=10.76$\GBs, \S\ref{sec:eval-hit}):
$T_0 = 0.40 - 0.79/10.76 \approx \NTzero$\s. The model then predicts
$T_0 + 0.79/1.71 = \Npredone$\s for a 16k-token hit on a single instance, close
to the \NBFivehitFiveZero--\NBThreehitFiveZero\s we measured in three runs not used
for the fit (\cref{fig:cost}). The implied threshold is $R\,T_0\approx\NNbreak$k
tokens: on this hardware and implementation, L2 cannot improve the median latency
of a 16k-token prompt, and pays off in latency only for longer prompts.
Its value at 16k is capacity: a hit consumes no GPU compute, which matters when
prefill engines are saturated.

\paragraph{Stability of synchronous loading.} With synchronous loading, a load
runs inside the engine step and, under PCP, stalls all ranks. The engine then
behaves like a single-server queue~\cite{kleinrock} whose service time per hit
is at least $ns/B$. With arrival rate $\lambda$, its utilization is
$\rho=\lambda\,ns/B$. In the hit-heavy replay $\lambda=\Nlam$ requests/s and a
median prompt is 0.79\,GB. Under that load the single instance sustained only
0.78--0.89\GBs (server-side measurement), so $\rho\approx\Nrhosust>1$: the queue
has no steady state and grows for as long as hits keep arriving. With eight
instances, $\rho\approx\Nrhoeight$.

\begin{figure}[t]
\centering
\includegraphics[width=\columnwidth]{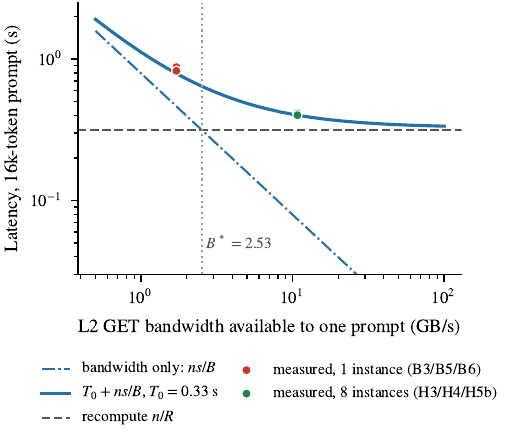}
\caption{Model (\cref{eq:model}) against measurement for a 16k-token hit.
$T_0$ is fitted on the H5b median only; the single-instance runs were not used to
fit. The dashed line is the measured recompute median.}
\label{fig:cost}
\end{figure}

\section{Design}
\label{sec:design}

The model points to two levers: deliver more bandwidth per prompt, and keep the
engine running while loads are slow. Both start from where the KV lives under PCP.

\subsection{Sharding the MLA KV Store Across PCP Ranks}
\label{sec:shard}

\paragraph{The rank-0 bottleneck.} Under tensor parallelism, storing MLA KV only
from rank 0 is sound: every rank holds the same latent. Under PCP the latent is
also replicated after the AllGather, but now rank 0 copies the entire prompt's KV
to host memory \emph{inside} the step, while the other seven ranks wait for it at
the next collective. In an Nsight trace of two concurrent $\sim$200k-token prompts,
rank 0 wrote 19.1\,GB to host memory and was 82\% busy against 98\% on the other
ranks, which accumulated $\sim$1.55\s of AllGather wait each in a 9.5\s window.

\paragraph{Ownership.} Since every rank holds the whole prompt's KV after the
AllGather, any rank can store any chunk. Chunk $c$, the chunk that starts at token
$256c$, is owned by rank $o(c)=c \bmod W$. Each rank stores, looks up and loads only
its own chunks, and its L1 is a $1/W$ slice of the engine's host-memory budget, so
total memory is unchanged.

\paragraph{Lookup.} The vLLM scheduler already sends each lookup to every rank's
lookup server and takes the minimum of the answers. In shard mode, rank $r$
answers with the start of its first owned chunk that is missing,
$a_r=\min\{c : o(c)=r,\ c \text{ missing}\}$, or $N$ if it has all of them.

\begin{proposition}
Let a prompt have chunks $0,\ldots,N-1$ and let $m$ be the first chunk missing
on its owner ($m=N$ if none). Then $\min_r a_r = m$.
\end{proposition}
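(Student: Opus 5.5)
We prove the equality by showing both inequalities, $\min_r a_r \ge m$ and $\min_r a_r \le m$, directly from the definitions of $o(c)=c \bmod W$, of $a_r$, and of $m$. First we fix the conventions. A chunk index $c\in\{0,\dots,N-1\}$ is \emph{missing} if it is absent on its owner $o(c)$. Then $m=\min\{c : c \text{ missing}\}$, with $m=N$ if the set is empty. Each rank reports $a_r=\min\{c : o(c)=r,\ c\text{ missing}\}$, or $N$ if that set is empty. Every chunk has exactly one owner, so the sets $M_r=\{c : o(c)=r,\ c \text{ missing}\}$ partition the set $M$ of all missing chunks. The proof reduces to the fact that the minimum over a partition equals the minimum of the part-minima, with the sentinel $N$ handled consistently.

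\emph{Lower bound.} Take any rank $r$. If $M_r=\emptyset$, then $a_r=N\ge m$, because $m\le N$ by definition. Otherwise $a_r\in M_r\subseteq M$, so $a_r\ge \min M = m$. Hence $\min_r a_r\ge m$.

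\emph{Upper bound.} If $m=N$, then $M=\emptyset$, so every $M_r$ is empty and $a_r=N$ for all $r$. This gives $\min_r a_r=N=m$. If $m<N$, let $r^*=o(m)$. Then $m\in M_{r^*}$, so $a_{r^*}\le m$, and therefore $\min_r a_r\le m$. The two bounds give equality.

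None of these steps is a real obstacle. The only point to check is that ``missing'' means the same thing in both places: rank $r$ decides only about its own chunks, and $m$ is defined by absence on the owner. Because this is a single notion, the partition argument goes through. It is also worth checking that the result matches LMCache's prefix semantics. The scheduler recomputes from token $256\min_r a_r$, and every chunk below $m$ is present on its owner. So the returned prefix is exactly the longest prefix whose chunks are all cached somewhere they can be loaded, and no ranks need to coordinate beyond the existing minimum.
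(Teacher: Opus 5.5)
Your proof is correct and takes essentially the paper's route: the lower bound $a_r \ge m$ holds because no owned chunk below $m$ is missing, and for $m<N$ the witness rank $o(m)$ reports $a_{o(m)} \le m$. Handling the case $m=N$ explicitly, where every $a_r=N$, fills in a step the paper leaves implicit.
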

\begin{proof}
Every chunk $c<m$ is present, so no rank has a missing owned chunk below $m$, and
$a_r\ge m$ for all $r$. If $m<N$, chunk $m$ is missing and owned by $o(m)$, and
all of that rank's owned chunks below $m$ are present, so $a_{o(m)}=m$.
\end{proof}

The minimum the scheduler already computes is therefore exactly the longest prefix
whose chunks are all present: sharded lookup needs no new messages. We also checked
the claim exhaustively for up to 10 chunks and $W\in\{2,3,4,8\}$, and on random
prompts of up to 800 chunks.

\paragraph{Load.} Loading must never let one rank skip a collective that the others
enter, or the step hangs. Every rank therefore runs the same sequence regardless of
what it found locally (\cref{alg:load}): fetch and stage its owned chunks, then
exchange one small message per rank containing a fingerprint of the chunk list, its
first failure and its chunk metadata; compute the agreed prefix from the $W$
messages, which are identical on every rank; and broadcast each chunk of the prefix
from its owner. Local failures (eviction, a failed GET, an unhealthy engine, a
staging error) are reported only through the exchange and become an agreed, shorter
prefix; mismatched inputs produce different fingerprints and an agreed prefix of
zero. Multi-process tests with real cache engines check these invariants under
random failures and after repeated failed loads.

\begin{algorithm}[t]
\caption{Sharded load on rank $r$ (all ranks run it)}
\label{alg:load}
\begin{algorithmic}[1]
\State $C \gets$ chunk list of (tokens, mask);\; $F \gets$ fingerprint$(C)$
\State $H, f_r \gets$ fetch owned chunks of $C$; $f_r$ = first owned failure
\State stage $H$ on the local GPU (non-blocking)
\For{$q = 0,\ldots,W-1$} \Comment{same order on every rank}
  \State $M_q \gets$ \textsc{BroadcastObject}$((F, f_r, \mathrm{meta}(H)),\ \mathrm{src}=q)$
\EndFor
\State $P \gets$ \textsc{AgreePrefix}$(M_0,\ldots,M_{W-1})$ \Comment{0 if fingerprints differ}
\For{$c = 0,\ldots,P-1$}
  \State \textsc{Broadcast}$(\mathrm{chunk}_c,\ \mathrm{src}=c \bmod W)$
\EndFor
\State write chunks $<P$ to the KV cache; release owned buffers
\end{algorithmic}
\end{algorithm}

\paragraph{Remote keys.} For MLA, LMCache names every remote key after worker 0 so
that ranks share one copy. In shard mode every rank writes its own chunks under
that worker-0 name, which keeps a chunk's remote key independent of the rank that
wrote it and preserves reuse across engines.

\subsection{A Rank-Partitioned Remote Tier}
\label{sec:perrank}

Raising $B$ needs more server threads, hence more Valkey instances. Spreading keys
over instances usually requires a cluster protocol or client-side consistent
hashing. The sharded store already provides a static partition: rank $r$ only ever
touches chunks with $c \bmod W = r$, and rank $r$ of \emph{every} engine owns the
same chunk indexes. Instance $r$ can therefore hold exactly rank $r$'s chunks, with
no routing logic, no traffic between instances, and no loss of cross-engine reuse.
The client change is one substitution: \texttt{\{rank\}} in the remote URL becomes
the worker's rank, e.g.\ \texttt{resp://valkey-\{rank\}:6379}. Each rank's share of
a 16k-token prompt ($\sim$100\,MB) is then fetched from its own instance in
parallel with the other seven.

\subsection{Asynchronous Loading for the Sharded Store}
\label{sec:async}

With enough bandwidth, $\rho\ll1$ and synchronous loads keep up
(\S\ref{sec:model}). Asynchronous loading is still valuable because it removes the
failure mode: a slow load delays only its own request while the engine keeps
running others. LMCache's asynchronous path assumed rank 0 performs all lookups;
we extend it to shard mode without new protocol messages. Each rank prefetches only
its owned chunks and passes the storage layer a custom cumulative-length vector
whose $k$-th entry is the start token of its $k$-th owned chunk. The storage
layer's ``contiguous hits'' answer is then exactly $a_r$ from the proposition,
in tokens, and the scheduler's minimum is the same prefix as in the synchronous
path. A property test over 3{,}000 random presence patterns checks this
equivalence. One edge case changed: a rank with no hits used to report 0 tokens,
which is wrong under sharding (if rank 1's first owned chunk is missing, the
correct answer is 256 tokens, since chunk 0 belongs to rank 0). It now reports
the first entry of the vector, which is 0 outside shard mode.

\subsection{Three Client Bugs}
\label{sec:bugs}

Bandwidth explained the steady cost; three bugs explained the worst outliers, all
of which left an empty GPU timeline.

\paragraph{Lost completions.} The native client's synchronous methods submitted a
request and only then registered the future that would receive the result, while
completions were drained on an event-loop thread. A completion drained between
the two steps was dropped, and the caller blocked forever. Sharding multiplies the
exposure by $W$, since every lookup now goes to every rank, and one loss is
permanent: afterwards every scheduler lookup waits its full 3\s timeout. One
production engine logged exactly this, one timeout per routed request (16 of 16 in
one minute) from one moment on, and py-spy showed every lookup thread blocked in
\texttt{fut.result()} with nothing outstanding. The fix holds one lock across
submit-and-register and across the drain. A test with a fake client that drains
before \texttt{submit} returns fails without the fix; replays showed no measurable
latency cost.

\paragraph{A GET of a missing key never returns.} RESP encodes a missing value as
the 5-byte null bulk string \texttt{\$-1\textbackslash r\textbackslash n}. The
native connector read a fixed 11-byte header sized for a chunk, so on a missing key
the read never completed and the connection was lost for good. Under LRU eviction
a chunk can disappear between the lookup (\texttt{EXISTS}) and the load
(\texttt{GET}); a synchronous load then stalls all ranks for the 10\s blocking
timeout. The fix parses the header up to CRLF, treats \texttt{\$-1} as a per-key
miss with the reply fully consumed, and makes the Python layer free unfilled
buffers instead of returning them as KV. Against a real Valkey, a 4-chunk GET with
one key deleted now returns \texttt{[T,T,F,T]} in 69\ms\ and the connection serves
the next GET in 40\ms; before, it produced no completion within 12\s.

\paragraph{Async GETs used the wrong key.} The asynchronous GET did not rewrite keys
to the worker-0 name used by the asynchronous lookup, so ranks other than 0 found
their chunks and then requested keys that do not exist (which is how we found the
previous bug). Upstream never exercised this, because only rank 0 performed
asynchronous lookups for MLA.

\section{Evaluation}
\label{sec:eval}

\subsection{Where a 16k-Token Prompt's KV Comes From}
\label{sec:eval-cost}

\Cref{tab:source} replays the full 683-request stream under different cache
states. Loading from L1 (host memory) is slightly faster than recompute; loading
from a single L2 instance is $2.7\times$ slower, on an idle engine (at most one
request running, none waiting). This matches the model: at 1.71\GBs a
0.79\,GB prompt needs 0.46\s of transfer on top of $T_0$.

\begin{table}[t]
\centering
\small
\caption{Prefill latency of the 683-request 16k-token stream by cache state.
``Hits'' are requests with $\ge$12k cached tokens. Single L2 instance, synchronous
loading unless noted.}
\label{tab:source}
\begin{tabular}{@{}llrrr@{}}
\toprule
Run & State & p50 & p99 & hits: n, p50 \\
\midrule
B1 & cold (recompute) & \NBOnepFiveZero & \NBOnepNineNine & \NBOnenhit, \NBOnehitFiveZero \\
B2 & L1 warm & \NBTwopFiveZero & \NBTwopNineNine & \NBTwonhit, \textbf{\NBTwohitFiveZero} \\
B3 & L1 cold, L2 warm & \NBThreepFiveZero & \NBThreepNineNine & \NBThreenhit, \textbf{\NBThreehitFiveZero} \\
B5 & as B3 (control) & \NBFivepFiveZero & \NBFivepNineNine & \NBFivenhit, \NBFivehitFiveZero \\
B6 & as B3, race fix & \NBSixpFiveZero & \NBSixpNineNine & \NBSixnhit, \NBSixhitFiveZero \\
A1 & as B3, async, rank-0 & \NAOnepFiveZero & \NAOnepNineNine & \NAOnenhit, \NAOnehitFiveZero \\
\bottomrule
\end{tabular}
\end{table}

\subsection{Remote-Tier Bandwidth}
\label{sec:eval-bw}

\Cref{fig:bw} measures aggregate GET bandwidth for 12.4\,MB chunks with LMCache's
native client against Valkey, with each client fetching 8-chunk batches (99\,MB,
one rank's share of a 16k prompt). Going from one client to eight on one instance
adds only 13\%, because the clients queue on the same server thread, and one
rank's batch takes 440\ms\ at p50 instead of 65\ms. With one instance per rank,
bandwidth scales to 10.76\GBs ($6.3\times$) and a batch takes 70\ms. During the
hit-heavy runs on a single instance, Valkey's main thread spent 73--74\% of
wall-clock time executing GETs (89--91\% in the busiest minute).

\begin{figure}[t]
\centering
\includegraphics[width=\columnwidth]{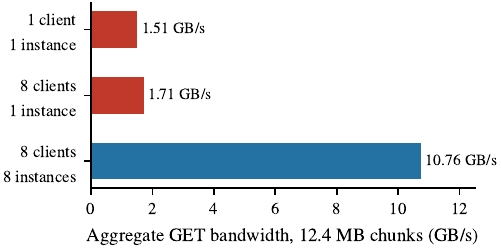}
\caption{Aggregate GET bandwidth, LMCache native RESP client, 12.4\,MB chunks.}
\label{fig:bw}
\end{figure}

\subsection{Hit-Heavy Traffic}
\label{sec:eval-hit}

\Cref{tab:hit,fig:trace,fig:cdf} compare five configurations on the 450-request
hit-heavy subset with 98.5\% of tokens served from L2 (except H2b; see below).

\begin{table*}[t]
\centering
\small
\caption{Hit-heavy subset (450 requests, 16k tokens, L2 primed, L1 empty).
Latencies in seconds. All runs completed 450/450 requests without errors.}
\label{tab:hit}
\begin{tabular}{@{}lllrrrrcrr@{}}
\toprule
Run & L2 & Loading & p50 & p90 & p99 & max & p50 by third of run & from L2 & GPU util \\
\midrule
H1b & 1 instance & sync & \NHOnebpFiveZero & \NHOnebpNineZero & \NHOnebpNineNine & \NHOnebmax & \NHOnebthirdOne\ $\to$ \NHOnebthirdTwo\ $\to$ \NHOnebthirdThree & \NHOnebcached\% & 7\% \\
H2b & 1 instance & async (rank-0 store) & \NHTwobpFiveZero & \NHTwobpNineZero & \NHTwobpNineNine & \NHTwobmax & \NHTwobthirdOne\ $\to$ \NHTwobthirdTwo\ $\to$ \NHTwobthirdThree & \NHTwobcached\% & 28\% \\
H3 & 8 instances & sync, sharded & \NHThreepFiveZero & \NHThreepNineZero & \NHThreepNineNine & \NHThreemax & \NHThreethirdOne\ $\to$ \NHThreethirdTwo\ $\to$ \NHThreethirdThree & \NHThreecached\% & 21\% \\
H4 & 8 instances & async, sharded & \NHFourpFiveZero & \NHFourpNineZero & \NHFourpNineNine & \NHFourmax & \NHFourthirdOne\ $\to$ \NHFourthirdTwo\ $\to$ \NHFourthirdThree & \NHFourcached\% & 20\% \\
H5b & 8 instances & async, sharded, all fixes & \textbf{\NHFivebpFiveZero} & \textbf{\NHFivebpNineZero} & \textbf{\NHFivebpNineNine} & \textbf{\NHFivebmax} & \NHFivebthirdOne\ $\to$ \NHFivebthirdTwo\ $\to$ \NHFivebthirdThree & \NHFivebcached\% & 22\% \\
\midrule
\multicolumn{3}{@{}l}{Recompute, same requests (B1)} & \NRECsubpFiveZero & \NRECsubpNineZero & \NRECsubpNineNine & \NRECsubmax & & 0.8\% & \\
\multicolumn{3}{@{}l}{Production engine, same requests} & \NPRODsubpFiveZero & \NPRODsubpNineZero & \NPRODsubpNineNine & \NPRODsubmax & & & \\
\bottomrule
\end{tabular}
\end{table*}

\begin{figure}[t]
\centering
\includegraphics[width=\columnwidth]{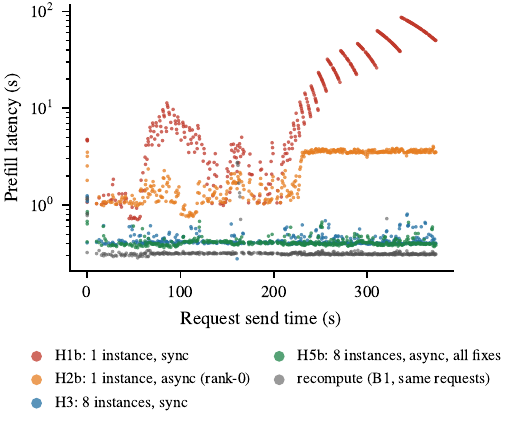}
\caption{Per-request prefill latency over the hit-heavy replay (log scale). With one
instance and synchronous loading (H1b) the engine never catches up; with one instance
per rank (H3, H5b) latency is flat from the first request to the last.}
\label{fig:trace}
\end{figure}

\paragraph{One instance, synchronous (H1b).} The run behaves as the model predicts
for $\rho>1$: median latency grows from \NHOnebthirdOne\s in the first third of
the run to \NHOnebthirdThree\s in the last, up to \NHOnebmaxrun\ requests are in
the engine at once (\cref{fig:backlog}), and the GPUs are 7\% busy. Every request
completes, but the engine spends most of its time waiting for Valkey.

\paragraph{One instance, asynchronous (H2b).} Asynchronous loading with the
default rank-0 store bounds latency at 3.2--3.8\s, but only by giving up: 49\% of
requests hit the 3\s asynchronous-lookup timeout and recomputed while their
prefetch continued in the background, which is why only \NHTwobcached\% of tokens
came from L2 despite the same number of GETs as H1b.

\paragraph{Eight instances (H3--H5b).} With one instance per rank, median latency
is \NHThreepFiveZero\s from start to finish, p99 drops to \NHThreepNineNine\s, and
the engine never holds more than one request. Asynchronous loading (H4) matches the
synchronous median, as expected at $\rho\approx\Nrhoeight$, and trims the tail.
With all fixes (H5b), p99 is \NHFivebpNineNine\s, $124\times$ below H1b, and the
latency distribution sits close to the production engine's on the same requests,
which served only 1--35\% of tokens from any cache (\cref{fig:cdf}). The Valkey
shards averaged 8.9\ms\ of server time per 12.4\,MB GET.

\paragraph{Remaining gap to recompute.} At the median, H5b is 0.09\s slower than
recomputing the same prompts. Transfer accounts for only \NTloadeight\s of a hit
at 10.76\GBs; the rest is $T_0$, spent in the engine (scheduling, the forward step
over the uncached tail, and the per-chunk exchange and broadcast) rather than in
Valkey. Even at infinite bandwidth a 16k-token hit would take about \NTzero\s,
slightly more than recompute, so on a fast GPU it cannot win on median latency
until $T_0$ shrinks (\S\ref{sec:discussion}). The tail is better than recompute's (p99
\NHFivebpNineNine\s against \NRECsubpNineNine\s), and a hit uses no GPU compute.

\begin{figure}[t]
\centering
\includegraphics[width=\columnwidth]{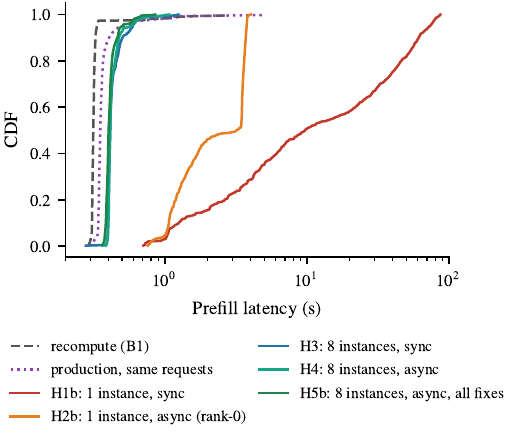}
\caption{Latency CDFs on the hit-heavy subset (log scale), with the recompute
baseline and the production engine on the same requests.}
\label{fig:cdf}
\end{figure}

\begin{figure}[t]
\centering
\includegraphics[width=\columnwidth]{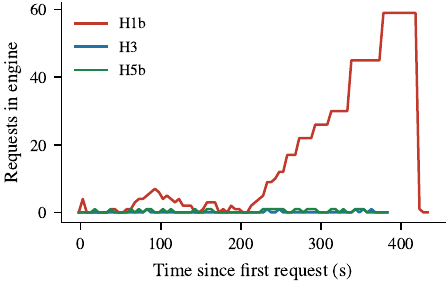}
\caption{Requests running or waiting in the engine (5\s samples).}
\label{fig:backlog}
\end{figure}

\subsection{Prefill Throughput}
\label{sec:eval-tput}

\Cref{fig:tput} measures computed prompt tokens per second on one node with 8
concurrent $\sim$200k-token prompts for 3 rounds, with LMCache storing every
chunk. The sharded store alone raises throughput from 43.2k to 49.4k tokens/s
(+14\%). The same traces exposed a second, unrelated bottleneck: FlashInfer's
fused SwiGLU + FP8 re-quantization kernel took 1.57\ms\ per call and 17\% of
prefill GPU time, because under EP8 most of its thread blocks process slots routed
to other GPUs. An upstream kernel rewrite~\cite{fi3969}, which we built for SM103
and validated, reduces it to 0.23\ms. Each fix alone gives +14--15\%; together
they give 57.3--58.0k tokens/s (+34\%), within 4--5\% of running with LMCache
turned off. In the two-prompt Nsight window, total AllGather time fell from 15.8\s
to 5.8\s and rank 0's device-to-host traffic from 19.1\,GB to 2.4\,GB.
With a remote backend attached, puts equal stored chunks and throughput is
unchanged (57.3k against 57.7k tokens/s).

\begin{figure}[t]
\centering
\includegraphics[width=\columnwidth]{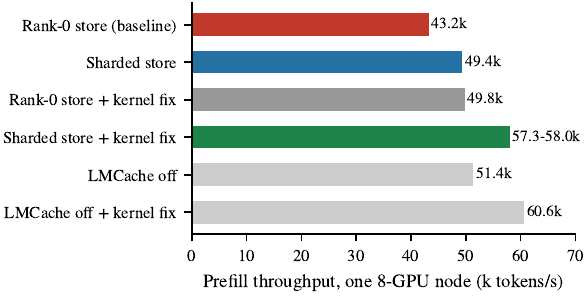}
\caption{Prefill throughput on one 8$\times$B300 node, 8 concurrent
$\sim$200k-token prompts. ``Kernel fix'' is FlashInfer PR \#3969~\cite{fi3969}.}
\label{fig:tput}
\end{figure}

\subsection{Correctness}
\label{sec:eval-correct}

Bit-exact comparison is not usable: the engine is not deterministic, and two cold
runs of the same prompt already produce different text. We test semantics with six
needle-in-a-haystack prompts of 12.8k--166.7k tokens, each hiding a random code at a
random depth, answered with greedy decoding four ways: cold; from L1 via
asynchronous prefetch; from L2 via asynchronous prefetch after an engine restart
(12{,}800--166{,}656 tokens loaded); and cold again as a control. The code was
found in 24 of 24 answers, and cache-served answers differ from cold ones no more
than the two cold runs differ from each other. All runs logged zero lookup
timeouts, zero asynchronous timeouts and zero errors. The synchronous sharded
store passed 3/3 needles after GPU-cache eviction (174k--204k tokens loaded),
36/36 answers in a concurrent burst, and a 30-minute soak of 1{,}549 requests
without errors.

\subsection{End-to-End Deployment}
\label{sec:eval-e2e}

Finally, we ran the full configuration (sharded store, eight L2 instances,
asynchronous loading, all fixes) as a disaggregated deployment with four PCP8
prefill nodes and one decode node (DP8, EP8) on the workload's 200k-token SLA
benchmark, which sends a sustained closed-loop mix of long prompts and checks TTFT
against P90 $<$ 70\s, P50 $<$ 30\s and mean $<$ 30\s. \Cref{tab:e2e} and
\cref{fig:e2e} compare it with earlier deployments that passed the same benchmark.
This comparison changes more than the cache tier: the aggregated baseline also
differs in parallelism and in prefill/decode placement. We report it as deployment
context, not as an ablation of the techniques above.

\begin{table}[t]
\centering
\small
\caption{200k-token SLA benchmark. TPM: input plus output tokens per minute.}
\label{tab:e2e}
\setlength{\tabcolsep}{4pt}
\begin{tabular}{@{}lrrrr@{}}
\toprule
 & Aggr. & 6P+1D & 5P+1D & \textbf{4P+1D} \\
\midrule
B300 nodes & 11 & 7 & 6 & \textbf{5} \\
Success & 100\% & 99.9\% & 99.88\% & \textbf{100\%} \\
TTFT P90 (s) & 34.0 & 25.3 & 44.5 & \textbf{30.5} \\
TTFT P50 (s) & 13.1 & 15.1 & 28.4 & \textbf{18.0} \\
TTFT mean (s) & 19.9 & 17.0 & 29.6 & \textbf{20.1} \\
TPM (M) & 7.74 & 14.1 & 12.6 & \textbf{12.7} \\
TPM per node (M) & 0.70 & 2.01 & 2.10 & \textbf{2.55} \\
\bottomrule
\end{tabular}
\end{table}

\begin{figure}[t]
\centering
\includegraphics[width=\columnwidth]{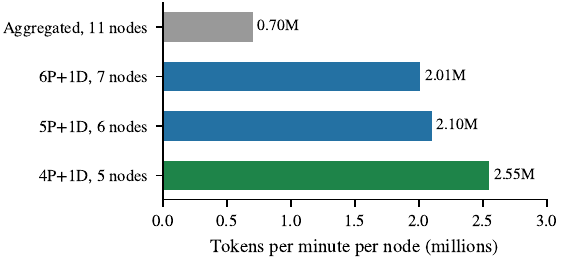}
\caption{Tokens per minute per B300 node on the SLA benchmark.}
\label{fig:e2e}
\end{figure}

\section{Discussion}
\label{sec:discussion}

\paragraph{Judge a cache tier against the prefill rate.} $B^*=sR$ grows with
accelerator speed: every faster GPU generation raises the bandwidth a cache tier
must deliver to be worth reading. At \NBstar\GBs per prompt for one B300 node,
a cache tier built from general-purpose key-value servers needs to scale out with
the engine's parallelism, as the rank partitioning does here, or move to RDMA
transports~\cite{mooncake}. Compression~\cite{cachegen} lowers $s$, and loading
and computing different parts of a prompt concurrently~\cite{cake} avoids treating
the choice as all-or-nothing.

\paragraph{Shrink the fixed cost.} Our $T_0\approx\NTzero$\s means that, on this
implementation, remote hits on prompts below $\sim$\NNbreak k tokens are slower
than recompute at any bandwidth. The load path broadcasts each chunk from its
owner, one collective per chunk; for a 16k prompt that is 64 broadcasts plus $W$
metadata exchanges. Batching chunks per owner into a single AllGather-style
collective, and overlapping the exchange with the fetch, are natural next steps.
A load-or-recompute policy that compares the model's predicted load time with
$n/R$ per request would also stop short prompts from paying $T_0$.

\paragraph{Static partitions make scale-out free.} The rank partitioning works
because the sharded store fixes ownership by chunk index, a property any
deterministic, rank-local placement shares. Scaling to more instances than ranks
(for capacity rather than bandwidth) needs a second-level hash within each rank's
partition; scaling the instance count with $W$ keeps the property that one rank
talks to one server.

\paragraph{Lockstep amplifies everything.} Under PCP, any per-rank stall (a rank
doing all the stores, a slow GET, a lost completion) stalls all ranks. Designs for
context-parallel engines should keep per-rank work symmetric and keep any
operation that can block for an unbounded time out of the step.

\section{Limitations}
\label{sec:limits}

\begin{itemize}
\item \textbf{Single runs.} Each condition was run once. The large effects
  ($124\times$ at p99, $6.3\times$ bandwidth, the unstable queue) are far outside
  run-to-run variation, but differences under about 10\% (for example H3 against
  H4) are not established.
\item \textbf{Prefill only, one engine.} The replays omit the KV handoff to decode
  and use one engine as the only Valkey client; contention from several engines
  on shared instances is estimated, not measured. At the workload's
  $\sim$0.45 requests/s per engine and 0.79\,GB per prompt, six engines would
  draw about 2\GBs, well below 10.76\GBs.
\item \textbf{One prompt-length bucket.} Latency results are for 14.7k--17.9k-token
  prompts. $T_0$ is fitted on one point, and the prompt-length threshold it implies
  is a prediction that we have not yet measured directly.
\item \textbf{Cache state.} The hit-heavy subset was chosen to fit in the private
  instance; production's full, evicting instance would produce more misses and
  more \texttt{EXISTS}-to-\texttt{GET} races.
\item \textbf{End-to-end comparison.} \S\ref{sec:eval-e2e} compares deployments
  that differ in more than the cache tier.
\end{itemize}

\section{Related Work}
\label{sec:related}

\paragraph{KV-cache reuse and tiering.} vLLM's PagedAttention~\cite{vllm} and
SGLang's RadixAttention~\cite{sglang} reuse prefixes within an engine.
CachedAttention~\cite{cachedattention} keeps conversation KV in host memory and
disk and overlaps loading with computation; Mooncake~\cite{mooncake} builds a
disaggregated KV store over RDMA for a production chatbot; MemServe~\cite{memserve}
adds an elastic memory pool for disaggregated serving; LMCache~\cite{lmcache}
provides the multi-tier cache layer we build on. These systems motivate remote
tiers; we characterize when such a tier is slower than recompute on fast GPUs and
show how context parallelism enables a coordination-free scale-out of the tier.

\paragraph{Load versus recompute.} CacheGen~\cite{cachegen} compresses KV to cut
transfer time over networks. Cake~\cite{cake} computes and loads different parts
of the prompt concurrently, and HCache~\cite{hcache} restores state from cached
hidden states, trading transfer for partial recomputation. Our cost model makes the
underlying trade-off explicit for a remote tier, including a fixed cost that
bandwidth cannot remove; these techniques act on its $s$ and $B$ terms.

\paragraph{Disaggregation and context parallelism.} DistServe~\cite{distserve} and
Splitwise~\cite{splitwise} separate prefill from decode; DéjàVu~\cite{dejavu}
streams KV between stages. Ring Attention~\cite{ringattention},
DeepSpeed-Ulysses~\cite{ulysses} and context-parallel inference~\cite{cpinference}
split long sequences across devices. We show that combining MLA, context
parallelism and a KV cache layer designed for tensor parallelism concentrates
cache I/O on one rank, and give a sharded store whose correctness follows from the
existing lookup protocol.

\section{Conclusion}

A remote KV tier is only useful if it beats recomputation, and on current GPUs that
is a high bar: for GLM-5.3 on a B300 node, a cache hit must arrive at more than
\NBstar\GBs, and a fixed engine cost puts short prompts out of reach altogether.
A single key-value server could not meet the bar, and synchronous loading turned
the shortfall into an unstable queue. Sharding the MLA KV store across
context-parallel ranks spreads cache I/O evenly, and its static ownership lets the
remote tier scale out by rank with no coordination. Together with asynchronous
loading and three client fixes, this brings hit-heavy p99 from \NHOnebpNineNine\s
to \NHFivebpNineNine\s.

\section*{Availability}
The changes are submitted upstream to LMCache:
the sharded store~\cite{pr5545}, the per-rank remote URL~\cite{pr5548},
asynchronous loading for the sharded store~\cite{pr5553}, and the three client
fixes~\cite{pr5547,pr5550,pr5551}.

{\small
\bibliographystyle{plainnat}
\bibliography{refs}
}

\end{document}